\newtheorem{theorem}{\textbf{Theorem}}
\newtheorem{lemma}{\textbf{Lemma}}
\newtheorem{proposition}{\textbf{Proposition}}
\begin{document}

\title{\textcolor{black}{Effect of Densification on Cellular Network Performance
with Bounded Pathloss Model}}

\author{\IEEEauthorblockN{Junyu Liu, Min Sheng, Lei Liu, Jiandong Li}}
\maketitle
\begin{abstract}
In this paper, we investigate how network densification influences
the performance of downlink cellular network in terms of coverage
probability (CP) and area spectral efficiency (ASE). Instead of the
simplified unbounded pathloss model (UPM), we apply a more realistic
bounded pathloss model (BPM) to model the decay of signal power caused
by pathloss. It is shown that network densification indeed degrades
CP when the base station (BS) density $\lambda$ is sufficiently large.
This is inconsistent with the result derived using UPM that CP is
independent of $\lambda$. Moreover, we shed light on the impact of
ultra-dense deployment of BSs on the ASE scaling law. Specifically,
it is proved that the cellular network ASE scales with rate $\lambda e^{-\kappa\lambda}$,
i.e., first increases with $\lambda$ and then diminishes to be zero
as $\lambda$ goes to infinity.
\end{abstract}

\section{Introduction\label{sec:Introduction}}

\IEEEPARstart{D}{ue} to the simplicity and mathematical tractability,
the unbounded pathloss model (UPM) $g\left(d\right)=d^{-\alpha}$%
\footnote{Note that $d$ denotes the distance from the receiver to the intended
transmitter.%
} has been widely applied to characterize channel power gain caused
by pathloss in wireless networks \cite{unbounded_model_ref_1,unbounded_model_ref_2,unbounded_model_ref_3},
especially when transmission distance is large in the rural areas.
One exhilarating result derived using this model is that the area
spectral efficiency (ASE) is monotonically increasing with the base
station (BS) density in heavily loaded cellular networks \cite{unbounded_model_ref_3}.
However, as the network density becomes larger in the future fifth
generation (5G) wireless networks, it becomes more likely that the
transmission distance is small. Despite its simplicity, UPM fails
to accurately characterize channel power gain in this case. In particular,
when $d\in\left(0,1\right)$, applying UPM would artificially force
the received signal power to be greater than the transmitted signal
power, which is physically impossible. 

With this regard, a more realistic model, namely, bounded pathloss
model (BPM), has been adopted to model the channel power gain caused
by pathloss, especially for dense urban scenarios. Widely applied
BPMs include $\left(1+d\right)^{-\alpha}$, $\left(1+d^{\alpha}\right)^{-1}$
and $\min\left(1,d^{-\alpha}\right)$. In literature, the impact of
BPM on wireless network performance has been extensively investigated
\cite{Ref_cluster_BPM_UPM,BPM_original_ref,Ref_applying_BPM}. In
\cite{Ref_cluster_BPM_UPM}, authors have figured out the influence
of UPM and BPM on the performance of clustered wireless ad hoc networks.
To be specific, depending on the user density, it is shown that the
benefits of clustering is greatly overestimated using UPM. Meanwhile,
the results in \cite{BPM_original_ref} indicate that the probability
density function (PDF) of the interference signal strength becomes
heavy-tailed under UPM, while quickly decays to be zero under BPM.
The difference is due to the singularity of the UPM at 0. Accordingly,
compared to BPM, the application of UPM leads to significant deviations
when evaluating the network performance, such as bit error rate and
wireless channel capacity, etc. However, to our best knowledge, the
effect of BPM on how ASE scales with network density in cellular downlink
networks remains to be explored.

In this paper, we investigate the influence of BPM on the key parameters
of cellular networks, i.e., coverage probability (CP) and ASE. It
is shown that CP is invariant of BS density in sparse scenarios, while
is dramatically degraded by the increasing BS density when BSs are
over-deployed. Based on this result, we further prove that the ASE
first increases and then decreases with the BS density under BPM,
which is different from the results in \cite{unbounded_model_ref_3}.
In addition, the optimal BS density, which leads to the largest network
ASE, can be numerically obtained or be approximated in closed-form
according to the analysis. The results are useful for the BS deployment
and network design.

\section{System Model\label{sec:System-Model}}

We consider \textcolor{black}{a cellular network, which consists BSs
and downlink users. Two independent homogeneous Poisson Point Processes
(HPPPs), $\Pi_{\mathrm{BS}}=\left\{ \mathrm{BS}_{i}\right\} $ and
$\Pi_{\mathrm{U}}=\left\{ \mathrm{U}_{j}\right\} $ $\left(i,j\in\mathbb{N}\right)$,
are used to model the locations of  BSs and downlink users, respectively,
in the infinitely large two-dimensional plane. A distance-based association
rule has been adopted, i.e., each cellular user is connected to the
geographically closest BS} \textcolor{black}{with constant transmit
power $P_{\mathrm{BS}}$}.\textcolor{black}{{} Meanwhile, we consider
a heavily loaded network, in which user density is much greater than
the BS density $\lambda$, such that each BS is connected with at
least one user}%
\footnote{\textcolor{black}{Note that each BS serves one use at one time and
users are served in a round robin manner if more than one user is
connected to one BS.}%
}\textcolor{black}{. Besides, BSs are assumed to always have data to
transmit.}

Channel power gain is assumed to consist of a pathloss component and
a distance-independent small-scale fading component. In particular,
to characterize the power gain caused by pathloss, two typical BPMs
are used, i.e., $g_{1}\left(d\right)=\left(1+d\right)^{-\alpha}$
and $g_{2}\left(d\right)=\left(1+d^{\alpha}\right)^{-1}$, where $\alpha>2$
denotes the pathloss exponent. Meanwhile, Rayleigh fading, $H\sim\exp\left(1\right)$,
is used to model the power gain caused by small-scale fading.

\textbf{Notation}: Let $f_{1}\left(x\right)$ and $f_{2}\left(x\right)$
denote two functions defined on the subset of real numbers. Then,
we write $f_{1}\left(x\right)=\Omega\left(f_{2}\left(x\right)\right)$
if $\exists m>0$, $x_{0}$, $\forall x>x_{0}$, $m\left|f_{2}\left(x\right)\right|\leq\left|f_{1}\left(x\right)\right|$
and $f_{1}\left(x\right)=\mathcal{O}\left(f_{2}\left(x\right)\right)$
if $\exists m>0$, $x_{0}$, $\forall x>x_{0}$, $\left|f_{1}\left(x\right)\right|\leq m\left|f_{2}\left(x\right)\right|$.

\section{Coverage Probability Analysis\label{sec:Coverage-Probability-Analysis}}

In this section, we investigate the performance of the downlink cellular
network by evaluating the CP of a typical downlink user $\mathrm{U}_{0}$,
which is defined as
\begin{equation}
\mathsf{P}_{\mathrm{SIR}}\left(\lambda\right)=\mathbb{P}\left(\mathrm{SIR}_{\mathrm{U}_{0}}>\tau\right),\label{eq:definition CP}
\end{equation}
where $\mathrm{SIR}_{\mathrm{U}_{0}}$ denotes the signal-to-interference
ratio%
\footnote{We ignore the impact of thermal noise on network performance, since
noise is negligible in interference-limited networks.%
} (SIR) at $\mathrm{U}_{0}$ and $\tau$ denotes the SIR threshold.
Denoting $d_{i}$ as the distance from $\mathrm{BS}_{i}$ to $\mathrm{U}_{0}$,
$\mathrm{SIR}_{\mathrm{U}_{0}}$ in (\ref{eq:definition CP}) can
be expressed as
\begin{equation}
\mathrm{SIR}_{\mathrm{U}_{0}}=\frac{P_{\mathrm{BS}}g_{n}\left(d_{0}\right)H_{\mathrm{U}_{0},\mathrm{BS}_{0}}}{\underset{\mathrm{BS}_{i}\in\Pi_{\mathrm{BS}}^{\dagger}}{\sum}P_{\mathrm{BS}}g_{n}\left(d_{i}\right)H_{\mathrm{U}_{0},\mathrm{BS}_{i}}},\: n\in\left\{ 1,2\right\} \label{eq:SIR expression}
\end{equation}
where $H_{\mathrm{U}_{0},\mathrm{BS}_{i}}$ denotes the power gain
caused by fading from $\mathrm{BS}_{i}$ to $\mathrm{U}_{0}$ and
$\Pi_{\mathrm{BS}}^{\dagger}=\Pi_{\mathrm{BS}}\backslash\left\{ \mathrm{BS}_{0}\right\} $. 

In the following, we provide the CP of $\mathrm{U}_{0}$ under BPM
in Proposition \ref{proposition: Coverage Probability}. Note that
we denote $HyF_{1}\left(x\right)={}_{2}F_{1}\left(1,1-\delta,2-\delta,-x\right)$
and $HyF_{2}\left(x\right)={}_{2}F_{1}\left(1,1-\frac{\delta}{2},2-\frac{\delta}{2},-x\right)$,
where $\delta=\frac{2}{\alpha}<1$ and $_{2}F_{1}\left(\cdot,\cdot,\cdot,\cdot\right)$
is the Gaussian hypergeometric function, for simplicity throughout
the paper.

\begin{proposition}[CP Under BMP]

Under BPMs $g_{1}\left(d\right)=\left(1+d\right)^{-\alpha}$ and $g_{2}\left(d\right)=\left(1+d^{\alpha}\right)^{-1}$,
the CPs defined by (\ref{eq:definition CP}) are given by (\ref{eq:CP bounded model 1})
and (\ref{eq:CP bounded model 2}), respectively,\begin{small}
\begin{align}
\mathsf{P}_{\mathrm{SIR},g_{1}}\left(\lambda\right) & =\mathbb{E}_{d_{0}}\left[e^{-\pi\lambda\left(1+d_{0}\right)\left(c_{1}\left(1+d_{0}\right)-c_{2}\right)}\right]\nonumber \\
 & =\frac{e^{-\pi\lambda\hat{c}}}{1+c_{1}}+\frac{\pi\sqrt{\lambda}\left(c_{1}+\hat{c}\right)e^{\frac{\pi\lambda\left(c_{2}^{2}-4\hat{c}\right)}{4\left(1+c_{1}\right)}}}{2\left(1+c_{1}\right)^{\frac{3}{2}}}\nonumber \\
 & \times\left(\mathrm{Erfc}\left(\frac{-\sqrt{\pi\lambda}\left(c_{1}+\hat{c}\right)}{2\sqrt{1+c_{1}}}\right)-2\right),\label{eq:CP bounded model 1}
\end{align}

\begin{equation}
\mathsf{P}_{\mathrm{SIR},g_{2}}\left(\lambda\right)=\mathbb{E}_{d_{0}}\left[e^{-\frac{2\pi\lambda\tau\left(1+d_{0}^{\alpha}\right)}{\left(\alpha-2\right)d_{0}^{\alpha-2}}HyF_{1}\left(\frac{1+\tau\left(1+d_{0}^{\alpha}\right)}{d_{0}^{\alpha}}\right)}\right],\label{eq:CP bounded model 2}
\end{equation}
\end{small}where $c_{1}=\frac{2\tau HyF_{1}\left(\tau\right)}{\alpha-2}$,
$c_{2}=\frac{2\tau HyF_{2}\left(\tau\right)}{\alpha-1}$, $\hat{c}=c_{1}-c_{2}$
and $\mathrm{Erfc}\left(\cdot\right)$ denotes the complementary error
function. As each user is associated with the nearest BS, the PDF
of $d_{0}$ is given by $f_{d_{0}}\left(x\right)=2\pi\lambda xe^{-\pi\lambda x^{2}}$,
$x\geq0$.

\label{proposition: Coverage Probability}

\end{proposition}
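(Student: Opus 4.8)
The plan is to condition on the serving distance $d_{0}$ and use the exponential law of the Rayleigh fading to convert the coverage event into a Laplace transform of the aggregate interference. Writing the interference normalized by $P_{\mathrm{BS}}$ as $I=\sum_{\mathrm{BS}_{i}\in\Pi_{\mathrm{BS}}^{\dagger}}g_{n}(d_{i})H_{\mathrm{U}_{0},\mathrm{BS}_{i}}$, the event $\mathrm{SIR}_{\mathrm{U}_{0}}>\tau$ is equivalent to $H_{\mathrm{U}_{0},\mathrm{BS}_{0}}>\tau I/g_{n}(d_{0})$. Since $H_{\mathrm{U}_{0},\mathrm{BS}_{0}}\sim\exp(1)$, integrating out the desired-link fading first gives $\mathbb{P}(\mathrm{SIR}_{\mathrm{U}_{0}}>\tau\mid d_{0},I)=e^{-\tau I/g_{n}(d_{0})}$, so that, conditioned on $d_{0}$, the coverage probability equals the Laplace transform $\mathcal{L}_{I}(\tau/g_{n}(d_{0}))$ of $I$ evaluated at $s=\tau/g_{n}(d_{0})$.

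Next I would compute $\mathcal{L}_{I}(s)$ through the probability generating functional of the interfering point process $\Pi_{\mathrm{BS}}^{\dagger}$. Because nearest-BS association forces every interferer to lie farther than $d_{0}$, and because $\mathbb{E}_{H}[e^{-sg_{n}(r)H}]=(1+sg_{n}(r))^{-1}$ under Rayleigh fading, the functional yields
\[
\mathcal{L}_{I}(s)=\exp\left(-2\pi\lambda\int_{d_{0}}^{\infty}\frac{sg_{n}(r)}{1+sg_{n}(r)}\,r\,dr\right).
\]
Setting $s=\tau/g_{n}(d_{0})$ reduces each model to a definite integral. For $g_{2}$ the integrand becomes $\tau(1+d_{0}^{\alpha})\,r/[(1+r^{\alpha})+\tau(1+d_{0}^{\alpha})]$; rescaling $r=d_{0}\xi$ and then substituting $t=\xi^{-\alpha}$ casts it into the Euler integral representation of the Gauss hypergeometric function, producing exactly the $HyF_{1}$ factor and prefactor $\tfrac{1}{(\alpha-2)d_{0}^{\alpha-2}}$ of (\ref{eq:CP bounded model 2}). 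For $g_{1}$ the ratio $g_{1}(r)/g_{1}(d_{0})=((1+d_{0})/(1+r))^{\alpha}$ suggests the change of variables $u=(1+r)/(1+d_{0})$, which splits the integral into a part proportional to $(1+d_{0})^{2}$ and a part proportional to $(1+d_{0})$; the further substitution $t=u^{-\alpha}$ identifies these with $\int_{0}^{1}t^{-\delta}(1+\tau t)^{-1}dt$ and $\int_{0}^{1}t^{-\delta/2}(1+\tau t)^{-1}dt$, i.e. with $HyF_{1}(\tau)$ and $HyF_{2}(\tau)$, reproducing the conditional exponent $-\pi\lambda(1+d_{0})(c_{1}(1+d_{0})-c_{2})$ of (\ref{eq:CP bounded model 1}) with $c_{1},c_{2}$ as defined.

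The last step is to average over $d_{0}$ with $f_{d_{0}}(x)=2\pi\lambda xe^{-\pi\lambda x^{2}}$. For $g_{2}$ I would leave the result as the stated expectation (\ref{eq:CP bounded model 2}), since no elementary closed form is expected. For $g_{1}$, however, the conditional coverage probability is the exponential of a quadratic in $(1+d_{0})$, so with $y=1+d_{0}$ the average becomes $\int_{1}^{\infty}2\pi\lambda(y-1)\,e^{-\pi\lambda[(1+c_{1})y^{2}-(2+c_{2})y+1]}\,dy$. I expect this final evaluation to be the main obstacle: one must complete the square in the exponent (which generates the constant $\tfrac{\pi\lambda(c_{2}^{2}-4\hat{c})}{4(1+c_{1})}$, using $\hat{c}=c_{1}-c_{2}$) and then split into the Gaussian moments $\int_{1}^{\infty}e^{-ay^{2}+by}\,dy$ and $\int_{1}^{\infty}y\,e^{-ay^{2}+by}\,dy$ over the truncated range $[1,\infty)$. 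The first moment contributes an $\mathrm{Erfc}$ term, while integrating the second by parts produces both the boundary term $e^{-\pi\lambda\hat{c}}/(1+c_{1})$ and a second $\mathrm{Erfc}$ contribution; careful bookkeeping of these prefactors is what assembles the closed form (\ref{eq:CP bounded model 1}). The earlier generating-functional and hypergeometric steps are routine by comparison.
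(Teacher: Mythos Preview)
Your proposal is correct and follows essentially the same route as the paper: condition on $d_{0}$, exploit $H\sim\exp(1)$ to reduce the coverage event to a Laplace transform, and evaluate that transform via the PGFL of the PPP restricted to $\{r>d_{0}\}$, arriving at (\ref{eq:CP proof general}). The paper's own proof stops there and simply states that substituting $g_{1}$ and $g_{2}$ finishes the argument; your substitutions $u=(1+r)/(1+d_{0})$ and $r=d_{0}\xi$ followed by $t=u^{-\alpha}$ (resp.\ $t=\xi^{-\alpha}$), and the subsequent completion of the square to extract the $\mathrm{Erfc}$ closed form, are precisely the omitted calculations and are carried out correctly.
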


\begin{proof}According to (\ref{eq:definition CP}) and (\ref{eq:SIR expression}),
we derive the CP as
\begin{align}
\mathsf{P}_{\mathrm{SIR}}\left(\lambda\right)= & \mathbb{P}\left(H_{\mathrm{U}_{0},\mathrm{BS}_{0}}>s\underset{\mathrm{BS}_{i}\in\Pi_{\mathrm{BS}}^{\dagger}}{\sum}H_{\mathrm{U}_{0},\mathrm{BS}_{i}}g_{n}\left(d_{i}\right)\right)\nonumber \\
\overset{\left(\mathrm{a}\right)}{=} & \mathbb{E}_{d_{0},\Pi_{\mathrm{BS}}^{\dagger},H_{\mathrm{U}_{0},\mathrm{BS}_{i}}}\left[\underset{\mathrm{BS}_{i}\in\Pi_{\mathrm{BS}}^{\dagger}}{\prod}e^{-sH_{\mathrm{U}_{0},\mathrm{BS}_{i}}g_{n}\left(d_{i}\right)}\right]\nonumber \\
\overset{\left(\mathrm{b}\right)}{=} & \mathbb{E}_{d_{0},\Pi_{\mathrm{BS}}^{\dagger}}\left[\underset{\mathrm{BS}_{i}\in\Pi_{\mathrm{BS}}^{\dagger}}{\prod}\frac{1}{1+sg_{n}\left(d_{i}\right)}\right]\nonumber \\
\overset{\left(\mathrm{c}\right)}{=} & \mathbb{E}_{d_{0}}\left[e^{-2\pi\lambda\int_{d_{0}}^{\infty}x\left(1-\frac{1}{1+sg_{n}\left(x\right)}\right)dx}\right],\label{eq:CP proof general}
\end{align}
where $s=\frac{\tau}{g_{n}\left(d_{0}\right)}$. In (\ref{eq:CP proof general}),
(a) and (b) follow due to $H_{\mathrm{U}_{0},\mathrm{BS}_{0}}\sim\exp\left(1\right)$
and $H_{\mathrm{U}_{0},\mathrm{BS}_{i}}\sim\exp\left(1\right)$, respectively,
and (c) follows due to the probability generating functional (PGFL)
of Poisson point process (PPP). Replacing $g_{n}\left(d_{i}\right)$
with $g_{1}\left(d_{i}\right)=\left(1+d_{i}\right)^{-\alpha}$ and
$g_{2}\left(d_{i}\right)=\left(1+d_{i}^{\alpha}\right)^{-1}$, respectively,
we complete the proof.\end{proof}

According to Proposition 1, we can numerically obtain the scaling
law of CP. Fig. \ref{fig:CP scaling law} plots the CP as a function
of BS density varying SIR thresholds under pathloss models $g\left(d\right)=d^{-\alpha}$,
$g_{1}\left(d\right)$ and $g_{2}\left(d\right)$, respectively. Note
that the CP derived using $g\left(d\right)$ is obtained from the
results in \cite{unbounded_model_ref_3}. It is observed that the
CPs are $\lambda$-invariant when $\lambda$ is sufficiently small.
The intuition behind this is that the increase of the received signal
power is counter-balanced by the increase of interference power. Hence,
the impact of $\lambda$ on CP is neutralized. Furthermore, we can
see that the gap between the CPs derived using BPMs and that derived
using UPM is small. This indicates that UPM can accurately model the
channel power gain caused by pathloss in sparse networks, where transmission
distance is basically large. Nevertheless, when the network is further
densified, the difference of UPM and BPM in impacting CP variation
behavior becomes evident. Specifically, under BPM, the CP is greatly
reduced with increasing $\lambda$ (e.g., $\lambda\in\left[0.1,1\right]\:\mathrm{BSs}/\mathrm{m}^{2}$
in Fig. \ref{fig:CP scaling law}) and decays to be zero when $\lambda$
is sufficiently large (e.g., $\lambda>1\:\mathrm{BSs}/\mathrm{m}^{2}$
in Fig. \ref{fig:CP scaling law}). The result manifests that user
experience is significantly degraded by the over-deployment of BSs.
In the next section, the influence of network densification on the
network performance, i.e., network ASE, is explored.

\begin{figure}[t]
\begin{centering}
\includegraphics[width=3.5in]{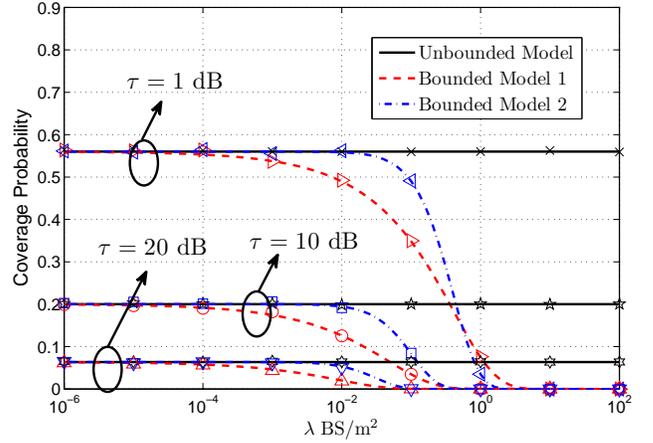}
\par\end{centering}

\caption{\label{fig:CP scaling law}CP scaling with BS density. System parameters
are set as $P_{\mathrm{BS}}=20\:\mathrm{dBmW}$ and $\alpha=4$. Numerical
results and simulation results are drawn by lines and markers, respectively.}
\end{figure}

\section{Scaling Law of Area Spectral Efficiency}

In this section, we study the network ASE and investigate the ASE
scaling law. In particular, the ASE of the downlink cellular network
is defined as
\begin{equation}
\mathcal{A}=\lambda\mathsf{P}_{\mathrm{SIR}}\left(\lambda\right)\mathrm{log}_{2}\left(1+\tau\right).\:\left[\mathrm{bits}/\left(\mathrm{s\cdot Hz\cdot m^{2}}\right)\right]\label{eq:definition ASE}
\end{equation}
By definition, it is easy to derive the network ASE based on Proposition
\ref{proposition: Coverage Probability} when two typical BPMs are
considered. However, since the exact results of the ASE are in complicated
forms, it is difficult to directly observe how ASE scales with the
BS density. To this end, we analyze the scaling law of ASE upper bound
and lower bound. Before providing the bounds of ASE, we first give
the following lemma.

\begin{lemma}

Denote $F_{1}\left(x\right)=HyF_{1}\left(x\right)$ and $F_{2}\left(x\right)=\frac{HyF_{1}\left(x\right)}{\alpha-2}-\frac{HyF_{2}\left(x\right)}{\alpha-1}$
$\left(x\geq0\right)$. Then, $F_{1}\left(x\right)$ and $F_{2}\left(x\right)$
are monotonically decreasing functions of $x$.

\label{lemma: Hypergeometric function feature}

\end{lemma}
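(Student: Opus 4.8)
The plan is to strip away the hypergeometric notation by invoking Euler's integral representation, which converts both $HyF_{1}$ and $HyF_{2}$ into elementary integrals whose monotonicity is transparent. Since $\delta=\frac{2}{\alpha}<1$, the parameters of $HyF_{1}\left(x\right)={}_{2}F_{1}\left(1,1-\delta,2-\delta,-x\right)$ satisfy $2-\delta>1-\delta>0$, so Euler's formula applies and, after using $\Gamma\left(2-\delta\right)/\Gamma\left(1-\delta\right)=1-\delta$, gives
\[
HyF_{1}\left(x\right)=\left(1-\delta\right)\int_{0}^{1}\frac{t^{-\delta}}{1+xt}\,dt.
\]
The same computation with $\delta$ replaced by $\delta/2$ yields $HyF_{2}\left(x\right)=\left(1-\frac{\delta}{2}\right)\int_{0}^{1}\frac{t^{-\delta/2}}{1+xt}\,dt$.

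First I would settle $F_{1}=HyF_{1}$. Differentiating the representation under the integral sign, which is justified because the $x$-derivative $-t^{1-\delta}/\left(1+xt\right)^{2}$ is dominated on $\left(0,1\right)$ by the integrable function $t^{1-\delta}$ for every $x\geq0$, gives
\[
F_{1}'\left(x\right)=-\left(1-\delta\right)\int_{0}^{1}\frac{t^{1-\delta}}{\left(1+xt\right)^{2}}\,dt<0,
\]
so $F_{1}$ is strictly decreasing.

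The delicate part will be $F_{2}$, since it is a difference of two decreasing functions and naive term-by-term differentiation produces terms of competing signs. The key observation is that the coefficients $\frac{1}{\alpha-2}$ and $\frac{1}{\alpha-1}$ are exactly tuned to cancel the hypergeometric prefactors: writing $\alpha=\frac{2}{\delta}$ gives $\alpha-2=\frac{2\left(1-\delta\right)}{\delta}$ and $\alpha-1=\frac{2-\delta}{\delta}$, whence $\frac{HyF_{1}\left(x\right)}{\alpha-2}=\frac{\delta}{2}\int_{0}^{1}\frac{t^{-\delta}}{1+xt}\,dt$ and $\frac{HyF_{2}\left(x\right)}{\alpha-1}=\frac{\delta}{2}\int_{0}^{1}\frac{t^{-\delta/2}}{1+xt}\,dt$. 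Subtracting collapses $F_{2}$ into a single integral,
\[
F_{2}\left(x\right)=\frac{\delta}{2}\int_{0}^{1}\frac{t^{-\delta}-t^{-\delta/2}}{1+xt}\,dt.
\]

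On $\left(0,1\right)$ one has $t^{\delta/2}<1$, hence $t^{-\delta}-t^{-\delta/2}=t^{-\delta}\left(1-t^{\delta/2}\right)>0$; the integrand is therefore positive, and differentiating under the integral sign as before yields $F_{2}'\left(x\right)=-\frac{\delta}{2}\int_{0}^{1}\frac{t\left(t^{-\delta}-t^{-\delta/2}\right)}{\left(1+xt\right)^{2}}\,dt<0$, which proves $F_{2}$ is strictly decreasing and completes the lemma. I expect the only real obstacle to be spotting this cancellation; once the two hypergeometric terms are merged into one sign-definite integrand, the monotonicity is immediate and the same dominated-derivative argument handles the differentiation throughout.
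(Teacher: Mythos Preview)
Your argument is correct and takes a genuinely different route from the paper. The paper differentiates using the standard series-based identity $\frac{d}{dz}{}_{2}F_{1}(a,b;c;z)=\frac{ab}{c}{}_{2}F_{1}(a+1,b+1;c+1;z)$: for $F_{1}$ it obtains $F_{1}'(x)=-\frac{1-\delta}{2-\delta}\,{}_{2}F_{1}(2,2-\delta,3-\delta,-x)$ and then invokes an external inequality (\cite[Theorem~3]{Lemma_ref}) to certify positivity of the hypergeometric factor; for $F_{2}$ it derives the contiguous-type relation $F_{2}'(x)=\frac{HyF_{2}(x)-HyF_{1}(x)}{\alpha x}$ and then appeals to another cited monotonicity result (\cite[Theorem~1]{Lemma_ref}) to show that the ratio $HyF_{2}/HyF_{1}$ decreases from $1$, hence $HyF_{2}<HyF_{1}$. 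Your approach instead passes through Euler's integral representation, which immediately makes $F_{1}'$ a negative integral, and---more elegantly---exploits the exact matching of the prefactors $(1-\delta)/(\alpha-2)=(1-\delta/2)/(\alpha-1)=\delta/2$ to collapse $F_{2}$ into a single integral with a sign-definite kernel $t^{-\delta}-t^{-\delta/2}>0$ on $(0,1)$. The payoff is that your proof is entirely self-contained (no external hypergeometric inequalities are needed) and the mechanism behind the monotonicity of $F_{2}$ is laid bare; the paper's proof, by contrast, is shorter to state but off-loads the real work to the cited reference.
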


\begin{proof} The monotonicity of $F_{1}\left(x\right)$ and $F_{2}\left(x\right)$
can be obtained by showing $\frac{dF_{1}\left(x\right)}{dx}<0$ and
$\frac{dF_{2}\left(x\right)}{dx}<0$. By definition, $\frac{dF_{1}\left(x\right)}{dx}=-\frac{1-\delta}{2-\delta}{}_{2}F_{1}\left(2,2-\delta,3-\delta,-x\right)$.
According to \cite[Theorem 3]{Lemma_ref}, we have $\frac{dF_{1}\left(x\right)}{dx}<-\frac{1-\delta}{2-\delta}\frac{1}{\left(1+x\frac{2-\delta}{3-\delta}\right)^{2}}<0$.
Then, $F_{1}\left(x\right)$ is monotonically decreasing with $x$.
Meanwhile, $\frac{dF_{2}\left(x\right)}{dx}=\frac{HyF_{2}\left(x\right)-HyF_{1}\left(x\right)}{\alpha x}$.
According to \cite[Theorem 1]{Lemma_ref}, $F_{3}\left(x\right)=\frac{HyF_{2}\left(x\right)}{HyF_{1}\left(x\right)}$
is a decreasing function of $x$, the maximum of which equals 1 when
$x=0$. Therefore, we have $HyF_{2}\left(x\right)<HyF_{1}\left(x\right)$
and $\frac{dF_{2}\left(x\right)}{dx}<0$. Hence, we complete the proof.\end{proof}

Based on Lemma \ref{lemma: Hypergeometric function feature}, we study
the upper/lower bounds of ASE under BPMs in the following proposition.

\begin{proposition}[ASE Upper/Lower Bound Under BMPs]

Under BPMs $g_{1}\left(d\right)=\left(1+d\right)^{-\alpha}$ and $g_{2}\left(d\right)=\left(1+d^{\alpha}\right)^{-1}$,
the network ASE is upper bounded by $\mathcal{A}^{\mathrm{U}}\left(\lambda\right)=\lambda\mathrm{log}_{2}\left(1+\tau\right)\frac{e^{-\pi\lambda2^{-\alpha}\hat{c}}}{1+2^{-\alpha}\hat{c}}$
and lower bounded by $\mathcal{A}^{\mathrm{L}}\left(\lambda\right)$
\begin{eqnarray}
 & = & \lambda\mathrm{log}_{2}\left(1+\tau\right)\left[\frac{e^{-\pi\lambda\left(1+2^{\alpha}c_{1}\right)}}{1+2^{\alpha-2}c_{1}}\right.\nonumber \\
 & - & \left.\frac{2^{\alpha-2}c_{1}\pi\sqrt{\lambda}e^{-\frac{\pi\lambda c_{1}2^{\alpha-2}}{1+2^{\alpha-2}c_{1}}}\mathrm{Erfc}\left(\frac{\sqrt{\pi\lambda}\left(1+2^{\alpha-1}c_{1}\right)}{\sqrt{1+2^{\alpha-2}c_{1}}}\right)}{\left(1+2^{\alpha-2}c_{1}\right)^{\frac{3}{2}}}\right].\label{eq:ASE lower bound}
\end{eqnarray}

\label{proposition: bounds using bounded model}

\end{proposition}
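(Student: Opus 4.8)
The plan is to reduce both bounds on $\mathcal{A}$ to bounds on the coverage probability. By (\ref{eq:definition ASE}), $\mathcal{A}=\lambda\log_{2}(1+\tau)\,\mathsf{P}_{\mathrm{SIR}}(\lambda)$, so it suffices to sandwich $\mathsf{P}_{\mathrm{SIR}}(\lambda)$ between $\frac{e^{-\pi\lambda 2^{-\alpha}\hat{c}}}{1+2^{-\alpha}\hat{c}}$ and the bracketed factor of (\ref{eq:ASE lower bound}). Both coverage expressions of Proposition \ref{proposition: Coverage Probability} share the shape $\mathsf{P}_{\mathrm{SIR}}(\lambda)=\mathbb{E}_{d_{0}}[e^{-2\pi\lambda J(d_{0})}]$ with $J(d_{0})=\int_{d_{0}}^{\infty}x\frac{\tau g_{n}(x)}{g_{n}(d_{0})+\tau g_{n}(x)}\,dx$, cf. (\ref{eq:CP proof general}); the whole idea is to sandwich the exponent $J(d_{0})$ between two quadratics in $1+d_{0}$, \emph{uniformly} in $d_{0}$, after which the expectation against $f_{d_{0}}(x)=2\pi\lambda x e^{-\pi\lambda x^{2}}$ collapses to an elementary Gaussian integral.

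For the upper bound I would work from the exact $g_{1}$ expression (\ref{eq:CP bounded model 1}), whose exponent is the polynomial $\pi\lambda[c_{1}(1+d_{0})^{2}-c_{2}(1+d_{0})]$. Lower bounding this polynomial by $\pi\lambda 2^{-\alpha}\hat{c}(1+d_{0}^{2})$ for all $d_{0}\ge 0$ -- the constant $2^{-\alpha}$ being exactly $g_{1}(1)$, i.e. the pathloss evaluated at the reference distance -- turns the expectation into the pure Gaussian integral $\mathbb{E}_{d_{0}}[e^{-\pi\lambda 2^{-\alpha}\hat{c}(1+d_{0}^{2})}]=\frac{e^{-\pi\lambda 2^{-\alpha}\hat{c}}}{1+2^{-\alpha}\hat{c}}$, which carries no linear term and hence no error function, giving $\mathcal{A}^{\mathrm{U}}(\lambda)$ after multiplication by $\lambda\log_{2}(1+\tau)$. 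For $g_{2}$ the same quadratic envelope must be reached in spite of the extra factor $HyF_{1}(\tau+(1+\tau)d_{0}^{-\alpha})$ in (\ref{eq:CP bounded model 2}); here Lemma \ref{lemma: Hypergeometric function feature} controls the monotone hypergeometric part while the singular algebraic factor $d_{0}^{2-\alpha}$ near the origin has to be paired with the decay of $HyF_{1}$ at large argument.

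For the lower bound I would instead upper bound $J(d_{0})$. Discarding the positive summand $\tau g_{n}(x)$ in the denominator gives $J(d_{0})\le\frac{\tau}{g_{n}(d_{0})}\int_{d_{0}}^{\infty}x g_{n}(x)\,dx$, and bounding $g_{n}(d_{0})^{-1}=(1+d_{0})^{\alpha}$ together with the remaining pathloss terms at the reference distance $d=1$ produces the powers $2^{\alpha}c_{1}$, $2^{\alpha-1}c_{1}$, $2^{\alpha-2}c_{1}$ seen in (\ref{eq:ASE lower bound}). This time a genuine linear-in-$d_{0}$ term survives in the exponent, so completing the square inside $\int_{0}^{\infty}2\pi\lambda x\,e^{-\pi\lambda[(1+A)x^{2}+px+q]}\,dx$ yields the complementary error function with argument proportional to $\frac{\sqrt{\pi\lambda}(1+2^{\alpha-1}c_{1})}{\sqrt{1+2^{\alpha-2}c_{1}}}$, reproducing (\ref{eq:ASE lower bound}).

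The hard part is not the final integrations, which are routine, but establishing the pointwise envelope inequalities uniformly in $d_{0}$ (and for both $g_{1}$ and $g_{2}$). For instance, the upper-bound step requires $c_{1}(1+d_{0})^{2}-c_{2}(1+d_{0})\ge 2^{-\alpha}\hat{c}(1+d_{0}^{2})$, which after setting $u=1+d_{0}$ is the nonnegativity of an upward parabola on $u\ge 1$; this is exactly where the sign information $HyF_{2}(\tau)<HyF_{1}(\tau)$ (equivalently $\hat{c}>0$) supplied by Lemma \ref{lemma: Hypergeometric function feature} enters, and one must check that the parabola does not dip below zero at its vertex. The analogous control for $g_{2}$ is more delicate because the exponent is a product of a monotone hypergeometric factor and a factor that is singular as $d_{0}\to 0$, so the two must be estimated jointly rather than term by term; and throughout, the envelope must be kept tight enough that the surviving bound still decays like $\lambda e^{-\kappa\lambda}$ rather than degenerating into a trivial estimate.
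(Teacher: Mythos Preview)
Your overall plan—sandwich the exponent $2J(d_{0})$ by quadratics in $1+d_{0}$ and then integrate against $f_{d_{0}}$—is exactly the paper's strategy, and your reading of why the upper bound carries no $\mathrm{Erfc}$ term is correct. But two of your concrete steps do not deliver the proposition as stated. For the lower bound, dropping $\tau g_{n}(x)$ from the denominator of $J(d_{0})$ gives, for $g_{1}$, the estimate $2J(d_{0})\le \frac{2\tau}{\alpha-2}(1+d_{0})^{2}$; the constant is $\frac{2\tau}{\alpha-2}$, \emph{not} $c_{1}=\frac{2\tau\,HyF_{1}(\tau)}{\alpha-2}$. Since $HyF_{1}(\tau)<1$, your bound is strictly looser and cannot reproduce (\ref{eq:ASE lower bound}). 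The paper keeps the full integrand and instead uses Lemma~\ref{lemma: Hypergeometric function feature}: the hypergeometric argument in the exponent always exceeds $\tau$, so by monotonicity one replaces $HyF_{1}(\cdot)$ by $HyF_{1}(\tau)$, which is precisely how $c_{1}$ enters. Moreover, the powers $2^{\alpha-2}c_{1}$, $2^{\alpha-1}c_{1}$, $2^{\alpha}c_{1}$ do not come from ``evaluating at the reference distance $d=1$''; they come from \emph{restricting the expectation to $d_{0}\ge 1$} (discarding the $[0,1)$ portion, valid for a lower bound on a nonnegative integrand) and then using $d_{0}\ge(1+d_{0})/2$ there to replace $d_{0}^{2-\alpha}$ by $((1+d_{0})/2)^{2-\alpha}$. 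That truncation is also what makes the Gaussian integral start at $d_{0}=1$ and produces the specific $\mathrm{Erfc}$ argument in (\ref{eq:ASE lower bound}).

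For $g_{2}$ you are missing the paper's main device: exploit $g_{1}(d)<g_{2}(d)$ directly in the SIR. For the CP upper bound one replaces the \emph{interference} pathloss $g_{2}\to g_{1}$; for the CP lower bound one replaces the \emph{signal} pathloss $g_{2}\to g_{1}$. Each substitution reduces the $g_{2}$ case to a $g_{1}$-type exponent, after which Lemma~\ref{lemma: Hypergeometric function feature} (applied to $F_{2}$ for the upper bound and to $F_{1}$ for the lower) together with the convexity estimate $1+d_{0}^{\alpha}\ge((1+d_{0})/2)^{\alpha}$ finishes the job; the factor $2^{-\alpha}$ in $\mathcal{A}^{\mathrm{U}}$ is born here, in the $g_{2}$ chain, not from the $g_{1}$ exponent you start from. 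Your proposed route for the $g_{2}$ upper bound—pairing the $d_{0}^{2-\alpha}$ singularity with the large-argument decay of $HyF_{1}$—would require sharp two-sided asymptotics for ${}_{2}F_{1}$ uniformly in $d_{0}$ and is considerably harder than the paper's substitution trick.
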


\begin{proof}When $g_{n}\left(d\right)$ is replaced by $g_{1}\left(d\right)=\left(1+d\right)^{-\alpha}$,
we obtain the lower bound of CP based on (\ref{eq:CP bounded model 1})
as
\begin{align}
\mathsf{P}_{\mathrm{SIR},g_{1}}\left(\lambda\right)\overset{\left(\mathrm{a}\right)}{>} & \mathbb{E}_{d_{0}}\left[e^{-\pi\lambda c_{1}\left(1+d_{0}\right)^{2}}\right]\nonumber \\
= & \frac{e^{-\pi\lambda c_{1}}}{1+c_{1}}-\frac{e^{-\frac{\pi\lambda c_{1}}{1+c_{1}}}\pi\sqrt{\lambda}c_{1}}{\left(1+c_{1}\right)^{\frac{3}{2}}}\mathrm{Erfc}\left(\frac{\sqrt{\pi\lambda}c_{1}}{\sqrt{1+c_{1}}}\right)\nonumber \\
= & \mathsf{P}_{\mathrm{SIR},g_{1}}^{\mathrm{L}}\left(\lambda\right),\label{eq:CP lower bound model 1 proof}
\end{align}
where (a) follows due to the fact that $c_{1}\left(1+d_{0}\right)-c_{2}<c_{1}\left(1+d_{0}\right)$
since $c_{2}>0$, and $e^{-x}$ is a decreasing function of $x$.
Next, we obtain the upper bound of CP as 
\begin{align}
\mathsf{P}_{\mathrm{SIR},g_{1}}\left(\lambda\right)\overset{\left(\mathrm{a}\right)}{<} & \mathbb{E}_{d_{0}}\left[e^{-\pi\lambda\hat{c}\left(1+d_{0}\right)^{2}}\right]\nonumber \\
= & \frac{e^{-\pi\lambda\hat{c}}}{1+\hat{c}}-\frac{e^{-\frac{\pi\lambda\hat{c}}{1+\hat{c}}}\pi\sqrt{\lambda}\hat{c}}{\left(1+\hat{c}\right)^{\frac{3}{2}}}\mathrm{Erfc}\left(\frac{\sqrt{\pi\lambda}\hat{c}}{\sqrt{1+\hat{c}}}\right)\nonumber \\
= & \mathsf{P}_{\mathrm{SIR},g_{1}}^{\mathrm{U}}\left(\lambda\right),\label{eq:CP upper bound model 1 proof}
\end{align}
where (a) follows because $c_{1}\left(1+d_{0}\right)-c_{2}>c_{1}\left(1+d_{0}\right)-c_{2}\left(1+d_{0}\right)$
in (\ref{eq:CP bounded model 1}) and $e^{-x}$ is a decreasing function
of $x$.

When $g_{n}\left(d\right)$ is replaced by $g_{2}\left(d\right)=\left(1+d^{\alpha}\right)^{-1}$,
the CP in (\ref{eq:definition CP}) turns into
\begin{align*}
\mathsf{P}_{\mathrm{SIR},g_{2}}\left(\lambda_{\mathrm{BS}}\right)= & \mathbb{P}\left(\frac{P_{\mathrm{BS}}H_{\mathrm{U}_{0},\mathrm{BS}_{0}}g_{2}\left(d_{0}\right)}{\underset{\mathrm{BS}_{i}\in\Pi_{\mathrm{BS}}^{\dagger}}{\sum}P_{\mathrm{BS}}H_{\mathrm{U}_{0},\mathrm{BS}_{i}}g_{2}\left(d_{i}\right)}>\tau\right).
\end{align*}
Since $g_{1}\left(d\right)<g_{2}\left(d\right)$, we derive the lower
bound of CP by weakening the useful signal power received at $\mathrm{BS}_{0}$
using the BPM $g_{1}\left(d_{0}\right)$. Accordingly, we have
\begin{align}
\mathsf{P}_{\mathrm{SIR},g_{2}}\left(\lambda\right) & >\mathbb{P}\left(\frac{P_{\mathrm{BS}}H_{\mathrm{U}_{0},\mathrm{BS}_{0}}g_{1}\left(d_{0}\right)}{\underset{\mathrm{BS}_{i}\in\Pi_{\mathrm{BS}}^{\dagger}}{\sum}P_{\mathrm{BS}}H_{\mathrm{U}_{0},\mathrm{BS}_{i}}g_{2}\left(d_{i}\right)}>\tau\right)\nonumber \\
 & \overset{\left(\mathrm{a}\right)}{=}\mathbb{E}_{d_{0}}\left[e^{-\frac{2\pi\lambda\tau\left(1+d_{0}\right)^{\alpha}}{\left(\alpha-2\right)d_{0}^{\alpha-2}}HyF_{1}\left(\frac{1+\tau\left(1+d_{0}^{\alpha}\right)}{d_{0}^{\alpha}}\right)}\right]\nonumber \\
 & \overset{\left(\mathrm{b}\right)}{>}\mathbb{E}_{d_{0}}\left[e^{-\frac{\pi\lambda c_{1}\left(1+d_{0}\right)^{\alpha}}{d_{0}^{\alpha-2}}}\right]\nonumber \\
 & >\mathbb{E}_{d_{0}}\left[e^{-\frac{\pi\lambda c_{1}\left(1+d_{0}\right)^{\alpha}}{d_{0}^{\alpha-2}}}\left|d_{0}\in\left[1,\infty\right)\right.\right]\nonumber \\
 & \overset{\left(\mathrm{c}\right)}{\geq}\mathbb{E}_{d_{0}}\left[e^{-\pi\lambda c_{1}\left(1+d_{0}\right)^{\alpha}\left(\frac{1+d_{0}}{2}\right)^{2-\alpha}}\left|d_{0}\in\left[1,\infty\right)\right.\right]\nonumber \\
 & =\mathsf{P}_{\mathrm{SIR},g_{2}}^{\mathrm{L}}\left(\lambda\right)<\mathsf{P}_{\mathrm{SIR},g_{1}}^{\mathrm{L}}\left(\lambda\right).\label{eq:CP lower bound model 2 proof}
\end{align}
In (\ref{eq:CP lower bound model 2 proof}), the derivation step of
(a) is similar to those in (\ref{eq:CP proof general}), (b) follows
because $F_{1}\left(x\right)=HyF_{1}\left(x\right)$ is a decreasing
function of $x$ according to Lemma \ref{lemma: Hypergeometric function feature}
and (c) follows because $d_{0}^{2-\alpha}\leq\left(\frac{1+d_{0}}{2}\right)^{2-\alpha}$
when $d_{0}\in\left[1,\infty\right)$. Similarly, we weaken the interference
signal power received at $\mathrm{BS}_{0}$ by replacing $g_{2}\left(d_{i}\right)$
with $g_{1}\left(d_{i}\right)$. Hence, the CP upper bound can be
obtained as follows
\begin{align}
 & \mathsf{P}_{\mathrm{SIR},g_{2}}\left(\lambda\right)\nonumber \\
< & \mathbb{P}\left(\frac{P_{\mathrm{BS}}H_{\mathrm{U}_{0},\mathrm{BS}_{0}}g_{2}\left(d_{0}\right)}{\underset{\mathrm{BS}_{i}\in\Pi_{\mathrm{BS}}^{\dagger}}{\sum}P_{\mathrm{BS}}H_{\mathrm{U}_{0},\mathrm{BS}_{i}}g_{1}\left(d_{i}\right)}>\tau\right)\nonumber \\
= & \mathbb{E}_{d_{0}}\left[e^{-\frac{2\pi\lambda s_{2}}{\left(1+d_{0}\right)^{\alpha-2}}\left(\frac{HyF_{1}\left(\frac{s_{2}}{\left(1+d_{0}\right)^{\alpha}}\right)}{\alpha-2}-\frac{HyF_{2}\left(\frac{s_{2}}{\left(1+d_{0}\right)^{\alpha}}\right)}{\alpha-1}\right)}\right]\nonumber \\
\overset{\left(\mathrm{a}\right)}{<} & \mathbb{E}_{d_{0}}\left[e^{-\pi\lambda\hat{c}\left(1+d_{0}^{\alpha}\right)\left(1+d_{0}\right)^{2-\alpha}}\right]\nonumber \\
\overset{\left(\mathrm{b}\right)}{<} & \mathbb{E}_{d_{0}}\left[e^{-\pi\lambda\hat{c}\frac{\left(1+d_{0}\right)^{\alpha}}{2^{\alpha}}\left(1+d_{0}\right)^{2-\alpha}}\right]\nonumber \\
< & \mathbb{E}_{d_{0}}\left[e^{-\pi\lambda\hat{c}\frac{1+d_{0}^{2}}{2^{\alpha}}}\right]\nonumber \\
= & \mathsf{P}_{\mathrm{SIR},g_{2}}^{\mathrm{U}}\left(\lambda\right)>\mathsf{P}_{\mathrm{SIR},g_{1}}^{\mathrm{U}}\left(\lambda\right),\label{eq:CP upper bound model 2 proof}
\end{align}
where (a) follows because $F_{2}\left(x\right)=\frac{HyF_{1}\left(x\right)}{\alpha-2}-\frac{HyF_{2}\left(x\right)}{\alpha-1}$
is a decreasing function of $x$ according to Lemma \ref{lemma: Hypergeometric function feature}
and (b) follows due to $1+d_{0}^{\alpha}\geq\left(\frac{1+d_{0}}{2}\right)^{\alpha}$.
Combining the results in (\ref{eq:CP lower bound model 1 proof}),
(\ref{eq:CP upper bound model 1 proof}), (\ref{eq:CP lower bound model 2 proof})
and (\ref{eq:CP upper bound model 2 proof}), we complete the proof.\end{proof}

Based on Proposition \ref{proposition: bounds using bounded model},
we characterize the scaling law of the ASE using the following theorem.

\begin{theorem}[ASE Scaling Under BMP]

Under the BPMs $g_{1}\left(d\right)=\left(1+d\right)^{-\alpha}$ and
$g_{2}\left(d\right)=\left(1+d^{\alpha}\right)^{-1}$, network ASE
scales with rate $\lambda e^{-\kappa\lambda}$, where $\kappa$ is
a constant.

\label{theorem: ASE scaling law}

\end{theorem}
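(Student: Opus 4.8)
The plan is to exploit the sandwich $\mathcal{A}^{\mathrm{L}}\left(\lambda\right)\le\mathcal{A}\le\mathcal{A}^{\mathrm{U}}\left(\lambda\right)$ supplied by Proposition \ref{proposition: bounds using bounded model} and to show that both the upper and the lower bound are, up to multiplicative constants, of the form $\lambda e^{-\kappa\lambda}$ as $\lambda\to\infty$. Since $\mathcal{A}$ is squeezed between two functions of this type, the claimed scaling then follows directly from the definitions of $\mathcal{O}$ and $\Omega$ given in Section \ref{sec:System-Model}. I would emphasize that the upper and lower rates need not coincide; what matters is that each bound has the qualitative shape $\lambda e^{-\kappa\lambda}$, which already forces $\mathcal{A}$ to first increase and then decay to zero.

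The upper bound is immediate. Writing $\mathcal{A}^{\mathrm{U}}\left(\lambda\right)=\frac{\log_{2}\left(1+\tau\right)}{1+2^{-\alpha}\hat{c}}\,\lambda e^{-\pi2^{-\alpha}\hat{c}\lambda}$, the prefactor is a positive constant, so $\mathcal{A}^{\mathrm{U}}\left(\lambda\right)$ is exactly of the form $C\lambda e^{-\kappa_{\mathrm{U}}\lambda}$ with $\kappa_{\mathrm{U}}=\pi2^{-\alpha}\hat{c}$, whence $\mathcal{A}=\mathcal{O}\left(\lambda e^{-\kappa_{\mathrm{U}}\lambda}\right)$. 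The lower bound requires more care, because $\mathcal{A}^{\mathrm{L}}\left(\lambda\right)$ is a difference of two positive, exponentially decaying terms, and one must rule out a cancellation that would make the combination decay strictly faster. Setting $a=2^{\alpha-2}c_{1}$, I would invoke the standard large-argument asymptotic $\mathrm{Erfc}\left(z\right)\sim e^{-z^{2}}/\left(z\sqrt{\pi}\right)$ with $z=\sqrt{\pi\lambda}\left(1+2a\right)/\sqrt{1+a}$. The first term of $\mathcal{A}^{\mathrm{L}}$ carries the exponent $-\pi\left(1+4a\right)\lambda$, while the second carries $-\pi\lambda a/\left(1+a\right)-z^{2}$; the algebraic identity $a+\left(1+2a\right)^{2}=\left(1+a\right)\left(1+4a\right)$ collapses the latter to the same exponent $-\pi\left(1+4a\right)\lambda=-\pi\left(1+2^{\alpha}c_{1}\right)\lambda$. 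Thus both terms decay at the identical rate, and the decisive check is that their leading coefficients do not annihilate: the first contributes $1/\left(1+a\right)$ and the second $-a/\left[\left(1+a\right)\left(1+2a\right)\right]$, which sum to the strictly positive constant $1/\left(1+2a\right)$. Consequently $\mathcal{A}^{\mathrm{L}}\left(\lambda\right)\sim\frac{\log_{2}\left(1+\tau\right)}{1+2a}\,\lambda e^{-\pi\left(1+2^{\alpha}c_{1}\right)\lambda}$, giving $\mathcal{A}=\Omega\left(\lambda e^{-\kappa_{\mathrm{L}}\lambda}\right)$ with $\kappa_{\mathrm{L}}=\pi\left(1+2^{\alpha}c_{1}\right)$.

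The main obstacle is precisely this non-cancellation verification in the lower bound. A crude estimate of the bracketed difference would recover only the dominant exponential rate but could not exclude the possibility that the polynomial-in-$\sqrt{\lambda}$ prefactors cancel at leading order, which would spoil the $\lambda e^{-\kappa\lambda}$ form. Carrying the $\mathrm{Erfc}$ expansion to the order needed to expose the surviving constant $1/\left(1+2a\right)$, and confirming its positivity, is the crux of the argument; the identity $a+\left(1+2a\right)^{2}=\left(1+a\right)\left(1+4a\right)$ is what makes the two exponents line up so that this comparison is meaningful in the first place. Once both bounds are in hand, combining $\mathcal{A}=\mathcal{O}\left(\lambda e^{-\kappa_{\mathrm{U}}\lambda}\right)$ and $\mathcal{A}=\Omega\left(\lambda e^{-\kappa_{\mathrm{L}}\lambda}\right)$ establishes the $\lambda e^{-\kappa\lambda}$ scaling law, which in turn shows that $\mathcal{A}$ first grows with $\lambda$ and then diminishes to zero as $\lambda\to\infty$, completing the proof.
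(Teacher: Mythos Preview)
Your proposal is correct and follows the same route as the paper: invoke the sandwich $\mathcal{A}^{\mathrm{L}}\le\mathcal{A}\le\mathcal{A}^{\mathrm{U}}$ from Proposition~\ref{proposition: bounds using bounded model}, read off $\mathcal{A}^{\mathrm{U}}=\mathcal{O}\bigl(\lambda e^{-\pi 2^{-\alpha}\hat{c}\lambda}\bigr)$ directly, and then rule out cancellation in the two-term difference defining $\mathcal{A}^{\mathrm{L}}$ to obtain $\mathcal{A}^{\mathrm{L}}=\Omega\bigl(\lambda e^{-\pi(1+2^{\alpha}c_{1})\lambda}\bigr)$. The only difference is one of detail in the lower-bound step: the paper simply asserts that the ratio $Q_{2}(\lambda)/Q_{1}(\lambda)$ of the two terms eventually lies in $(0,\tfrac12)$, whereas you carry out the $\mathrm{Erfc}$ asymptotic explicitly and use the identity $a+(1+2a)^{2}=(1+a)(1+4a)$ to show that the limiting ratio is $a/(1+2a)<\tfrac12$; your computation thus supplies the justification the paper leaves implicit.
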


\begin{proof}The result is obtained by proving that $\mathcal{A}^{\mathrm{L}}\left(\lambda\right)=\Omega\left(\lambda e^{-\pi\lambda\left(1+2^{\alpha}c_{1}\right)}\right)$
and $\mathcal{A}^{\mathrm{U}}\left(\lambda\right)=\mathcal{O}\left(\lambda e^{-\pi\lambda2^{-\alpha}\hat{c}}\right)$.
According to Proposition \ref{proposition: bounds using bounded model},
it is obtained that $\left|\mathcal{A}^{\mathrm{U}}\left(\lambda\right)\right|\leq\lambda\mathrm{log}_{2}\left(1+\tau\right)e^{-\pi\lambda2^{-\alpha}\hat{c}}$.
Therefore, it can be shown that $\mathcal{A}^{\mathrm{U}}\left(\lambda\right)=\mathcal{O}\left(\lambda e^{-\pi\lambda2^{-\alpha}\hat{c}}\right)$.
Then, considering the ASE lower bound, we have $\left|\mathcal{A}^{\mathrm{L}}\left(\lambda\right)\right|\geq\frac{\lambda\mathrm{log}_{2}\left(1+\tau\right)}{1+2^{\alpha-2}c_{1}}\left(Q_{1}\left(\lambda\right)-Q_{2}\left(\lambda\right)\right)$,
where $Q_{1}\left(\lambda\right)=e^{-\pi\lambda\left(1+2^{\alpha}c_{1}\right)}$
and $Q_{2}\left(\lambda\right)=\frac{2^{\alpha-2}\pi c_{1}\sqrt{\lambda}e^{-\frac{\pi\lambda c_{1}2^{\alpha-2}}{1+c_{1}2^{\alpha-2}}}}{\sqrt{1+2^{\alpha-2}c_{1}}}\mathrm{Erfc}\left(\frac{\sqrt{\pi\lambda}\left(c_{1}2^{\alpha-1}+1\right)}{\sqrt{c_{1}2^{\alpha-2}+1}}\right)$.
Through showing $\exists\lambda_{0}>0$, $\forall\lambda>\lambda_{0}$,
$\frac{Q_{2}\left(\lambda\right)}{Q_{1}\left(\lambda\right)}\in\left(0,\frac{1}{2}\right)$,
we have $\left|\mathcal{A}^{\mathrm{L}}\left(\lambda\right)\right|\geq\frac{\lambda\mathrm{log}_{2}\left(1+\tau\right)}{2\left(1+2^{\alpha-2}c_{1}\right)}Q_{1}\left(\lambda\right)$.
Hence, $\forall\lambda>\lambda_{0}$, $\left|\mathcal{A}^{\mathrm{L}}\left(\lambda\right)\right|\geq\frac{\mathrm{log}_{2}\left(1+\tau\right)}{2\left(1+2^{\alpha-2}c_{1}\right)}\left|\lambda e^{-\pi\lambda\left(1+2^{\alpha}c_{1}\right)}\right|$.
Therefore, we have $\mathcal{A}^{\mathrm{L}}\left(\lambda\right)=\Omega\left(\lambda e^{-\pi\lambda\left(1+2^{\alpha}c_{1}\right)}\right)$.\end{proof}

It is shown from Theorem \ref{theorem: ASE scaling law} that the
network ASE first increases and then decreases with $\lambda$. On
the one hand, network densification greatly improves spatial reuse
by reducing the distance between transmitters and receivers. On the
other hand, network over-densification pushes too many interfering
BSs around downlink users, which incurs severe inter-cell interference.
Therefore, when the spatial resources are fully exhausted, the benefits
of spatial reuse vanish and the detriment of network densification
overwhelms, thereby degrading network ASE. More importantly, the results
demonstrate the importance of using BPM to characterize channel power
gain, considering the ultra-dense deployment of BSs.

Fig. \ref{fig:ASE scaling with BS density} plots the ASE scaling
with BS density. It can be seen that the network performance is overestimated
using UPM, as the resulting ASE is always greater than that derived
using BPM. The reason is that the application of UPM artificially
amplifies the useful signal power at $d\in\left(0,1\right)$. Meanwhile,
we observe that the exact results derived using the two BPMs decay
with $\lambda$ at the same rate with those derived using the upper/lower
bound. This indicates the validity of Theorem \ref{theorem: ASE scaling law}.
Additionally, we interestingly find that the optimal $\lambda^{*}$,
which maximizes the system ASE, can be approximated using the densities
$\lambda_{\mathrm{U}}^{*}$ that maximize the ASE upper bound $\mathcal{A}_{g_{1}}^{\mathrm{U}}\left(\lambda\right)=\lambda\mathrm{log}_{2}\left(1+\tau\right)\mathsf{P}_{\mathrm{SIR},g_{1}}^{\mathrm{U}}\left(\lambda\right)$,
where $\mathsf{P}_{\mathrm{SIR},g_{1}}^{\mathrm{U}}\left(\lambda\right)$
is given by (9). Note that $\lambda_{\mathrm{U}}^{*}$ can be derived
in closed-form by solving $\frac{d\mathcal{A}_{g_{1}}^{\mathrm{U}}\left(\lambda\right)}{d\lambda}=0$.
Therefore, the impact of system parameters on $\lambda^{*}$ can be
directly observed, which provides guidance for the deployment of BSs.

\begin{figure}[t]
\begin{centering}
\includegraphics[width=3.5in]{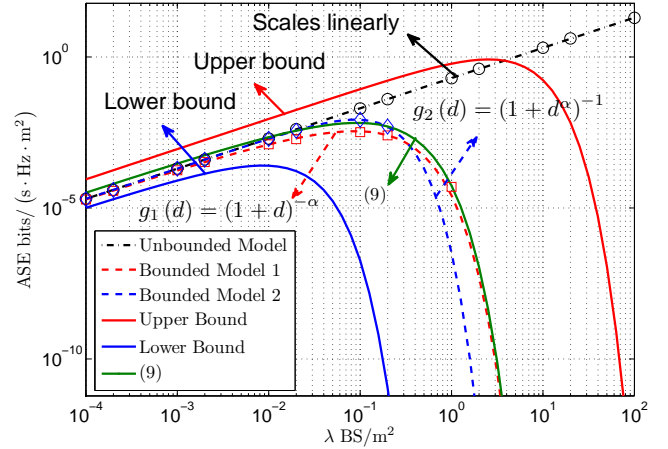}
\par\end{centering}

\caption{\label{fig:ASE scaling with BS density}ASE scaling with BS density.
System parameters are set as $P_{\mathrm{BS}}=20\:\mathrm{dBmW}$,
$\tau=10\:\mathrm{dB}$ and $\alpha=4$. Numerical results and simulation
results are drawn by lines and markers, respectively.}
\end{figure}

\section{Conclusion}

In this paper, we show that the cellular network ASE scales with rate
$\lambda e^{-\kappa\lambda}$ when BPM is used to characterize pathloss.
According to the scaling law, network densification cannot alway boost
the network capacity especially when BS density is sufficiently large.
This differs from the traditional understanding that cellular network
ASE scales linearly with $\lambda$. Meanwhile, the closed-form expression
of the density, which leads to the inflection of the ASE, can be approximated.
The result is helpful to understand how network parameters affect
the network scaling law, thereby providing guideline for the efficient
deployment of cellular networks.

It is worth noting that the single-slope BPM used throughout the paper
cannot characterize the discrepant power decay levels within different
regions, which is caused by non-line-of-sight (NLOS) and line-of-sight
(LOS) propagation of the signal. Recently, the influence of NLOS and
LOS transmissions on dense network performance, e.g., CP and ASE,
has been evaluated and proved to be significant in \cite{Ref_multi_slope_1,Ref_multi_slope_2}.
Nonetheless, the impact of singularity has not been fully explored
therein. Therefore, future study should consider the multiple pathloss
model, which is defined based on BPM. Following this approach, the
influence of network densification on the network capacity would become
valid and convincing.

\bibliographystyle{IEEEtran}
\bibliography{ver6_double_column_for_Arxiv}

\begin{thebibliography}{1}
\providecommand{\url}[1]{#1}
\csname url@samestyle\endcsname
\providecommand{\newblock}{\relax}
\providecommand{\bibinfo}[2]{#2}
\providecommand{\BIBentrySTDinterwordspacing}{\spaceskip=0pt\relax}
\providecommand{\BIBentryALTinterwordstretchfactor}{4}
\providecommand{\BIBentryALTinterwordspacing}{\spaceskip=\fontdimen2\font plus
\BIBentryALTinterwordstretchfactor\fontdimen3\font minus
  \fontdimen4\font\relax}
\providecommand{\BIBforeignlanguage}[2]{{%
\expandafter\ifx\csname l@#1\endcsname\relax
\typeout{** WARNING: IEEEtran.bst: No hyphenation pattern has been}%
\typeout{** loaded for the language `#1'. Using the pattern for}%
\typeout{** the default language instead.}%
\else
\language=\csname l@#1\endcsname
\fi
#2}}
\providecommand{\BIBdecl}{\relax}
\BIBdecl

\bibitem{unbounded_model_ref_1}
A.~Guo and M.~Haenggi, ``Spatial stochastic models and metrics for the
  structure of base stations in cellular networks,'' \emph{IEEE Trans. Wireless
  Commun.}, vol.~12, no.~11, pp. 5800--5812, Nov. 2013.

\bibitem{unbounded_model_ref_2}
H.~ElSawy and E.~Hossain, ``On stochastic geometry modeling of cellular uplink
  transmission with truncated channel inversion power control,'' \emph{IEEE
  Trans. Wireless Commun.}, vol.~13, no.~8, pp. 4454--4469, Aug. 2014.

\bibitem{unbounded_model_ref_3}
H.~S. Dhillon, R.~K. Ganti, F.~Baccelli, and J.~G. Andrews, ``Modeling and
  analysis of {K-Tier} downlink heterogeneous cellular networks,'' \emph{IEEE
  J. Sel. Areas Commun.}, vol.~30, no.~3, pp. 550--560, Apr. 2012.

\bibitem{Ref_cluster_BPM_UPM}
R.~K. Ganti and M.~Haenggi, ``Interference and outage in clustered wireless ad
  hoc networks,'' \emph{IEEE Trans. Inf. Theory}, vol.~55, no.~9, pp.
  4067--4086, Sep. 2009.

\bibitem{BPM_original_ref}
H.~Inaltekin, M.~Chiang, H.~V. Poor, and S.~B. Wicker, ``On unbounded path-loss
  models: effects of singularity on wireless network performance,'' \emph{IEEE
  J. Sel. Areas Commun.}, vol.~27, no.~7, pp. 1078--1092, Sep. 2009.

\bibitem{Ref_applying_BPM}
H.~Inaltekin, ``Gaussian approximation for the wireless multi-access
  interference distribution,'' \emph{IEEE Trans. Signal Process.}, vol.~60,
  no.~11, pp. 6114--6120, Nov. 2012.

\bibitem{Lemma_ref}
D.~Karp and S.~Sitnik, ``Inequalities and monotonicity of ratios for
  generalized hypergeometric function,'' \emph{Journal of Approximation
  Theory}, vol. 161, no.~1, pp. 337--352, Nov. 2009.

\bibitem{Ref_multi_slope_1}
X.~Zhang and J.~G. Andrews, ``Downlink cellular network analysis with
  multi-slope path loss models,'' \emph{IEEE Trans. Commun.}, vol.~63, no.~5,
  pp. 1881--1894, May 2015.

\bibitem{Ref_multi_slope_2}
M.~Ding, P.~Wang, D.~Lopez-Perez, G.~Mao, and Z.~Lin, ``Performance impact of
  {LoS} and {NLoS} transmissions in dense cellular networks,'' \emph{IEEE
  Trans. Wireless Commun.}, vol.~15, no.~3, pp. 2365--2380, Mar. 2016.

\end{thebibliography}

\end{document}